\theoremstyle{plain}
\newtheorem{thm}{Theorem}[section]
\newtheorem{lemma}[thm]{Lemma}
\newtheorem{prop}[thm]{Proposition}
\newtheorem{cor}[thm]{Corollary}
\theoremstyle{definition}
\begin{document}

\title{Harper operators, Fermi curves and Picard--Fuchs equations}

\author{
        \textsc{Dan Li} \\
        \\ 
Max Planck Institute for Mathematics\\
Vivatsgasse 7, 53111 Bonn, Germany\\
		Email: danli091981@gmail.com\\
        \\       
        MSC: 74S25   \\
        \\
        Key words: Harper operator, Fermi curve, \\
        Picard--Fuchs equation, $q$-expansion
	    }

\date{}

\maketitle

\newpage

\begin{abstract}
This paper is a continuation of the work on the spectral problem of Harper operator using algebraic geometry.
We continue to discuss the local monodromy of algebraic Fermi curves based on Picard-Lefschetz formula.
The density of states over approximating components of  Fermi curves satisfies a Picard-Fuchs 
equation. By the property of Landen transformation, the density of states has a Lambert series as the quarter period. 
A  $q$-expansion of the energy level  can be derived from a mirror map as in the B-model. 
\end{abstract}

\section{Introduction}\label{Intro}
In solid state physics, we are interested in the behavior of moving electrons subject to periodic potential in a 2-dimensional lattice,
especially when  a uniform magnetic field is turned on.
We studied in \cite{L11} the spectral theory of Harper operator with irrational parameters 
 generalizing some results on Bloch variety and algebraic Fermi curves developed 
by Gieseker, Kn{\"o}rrer and Trubowitz \cite{GKT93}. 
Harper operator is the discrete magnetic Laplacian of a square lattice electron in a magnetic field, and its associated Bloch variety 
describes the complex energy-crystal momentum dispersion relation.
Harper operator arises in the study of integer quantum Hall effect, and it gives a special element in the noncommutative 2-torus \cite{BES94}.
By the famous ``Ten Martini Problem'' solved by Avila and Jitomirskaya \cite{AJ09}, the spectrum of the almost Mathieu operator
(i.e. 1-dimensional reduction of Harper operator) is a Cantor set of zero Lebesgue measure for all irrational magnetic fluxes, for 
a square lattice this Cantor structure is illustrated by Hofstadter butterfly. 

We consider the simplest model when the potential is zero under the independent electron approximation, 
and the Fermi surface could be roughly approximated
by real Fermi curves. In our case, each component of the Fermi curves gives rise to a stable family of elliptic curves with four singular fibers
belonging to the Beauville family \cite{Beauville82}. Peters and Stienstra \cite{PS89} also 
studied a similar pencil of K3-surfaces
related to the irrationality of $\zeta(3)$. By comparison, we are more interested in the density of states and 
the spectral aspect of Harper operator.

In \cite{L11}, we gave the geometric picture of the Bloch variety at an algebro-geometric level, then calculated the density of states
and some spectral function. 
In fact,  the compactification of the Bloch variety is an ind-pro-variety 
with properties analogous to a totally disconnected space, which is compatible with the Hofstadter butterfly picture. 
 The density of states over each 
component of Fermi curves can be calculated by an elliptic integral plus Landen transformations. 

In complex geometry,
a Picard-Fuchs equation (corresponds to a Gauss-Manin connection in the geometric setting) describes how the local period solutions 
change in a family. And the corresponding Schwarzian
equation characterizes  the change of the ratio of periods, i.e. the complex structure. Take into account the monodromy representation,
the exponential of the period ratio gives a local coordinate of the moduli space of complex structures.
In the context of mirror symmetry, Picard-Fuchs equation encodes the information from the B-model of the mirror partner, and Schwarzian
equation characterizes the mirror map, 
which is supposed to be a local identification between K{\"a}hler moduli and complex moduli in the large complex structure limit.

In this letter, we continue the investigation of the algebro-geometric properties of the Harper operator. 
We start with the monodromy of the Fermi curves,  the local monodromy will be computed 
in terms of Picard-Lefschetz formula, while the global monodromy was already discussed in \cite{GKT93}. 
Then we focus on a Picard-Fuchs equation satisfied by the density of states and the corresponding Schwarzian equation,
the density of states can be written as a hypergeometric function so that it has some modular property. 
Finally, from the relation between the elliptic modulus and the energy level,
we obtain  a $q$-expansion of the energy level based on a mirror map as in mirror symmetry.

\section{Fermi curves of Harper operator}  

In order to fix the notations, we briefly recall the spectral problem of Harper operator
discussed in \cite{L11}.
As a square lattice discrete magnetic Laplacian, Harper operator $H$ acting on $\ell^2(\mathbb{Z}^2)$ is defined by
\begin{equation}\label{Harper}
\begin{array}{rl}
H \psi (m,n)  := &   e^{-2\pi i \alpha n} \psi(m+1,n) +   e^{2\pi i \alpha n} \psi(m-1,n)  + \\
                &  e^{-2\pi i \beta m} \psi(m,n+1)  +     e^{2\pi i \beta m} \psi(m,n-1),
\end{array}                            
\end{equation}
where the two  magnetic translation operators
\begin{equation}
U\psi(m,n) : =e^{-2\pi i\alpha n} \psi(m+1,n) \ \  \text{ and }  \ \  V\psi(m,n) : =e^{-2\pi
i\beta m} \psi(m,n+1) 
\end{equation}
have irrational real parameters $\alpha$ and $\beta$ respectively. 
 
For distinct prime numbers $a$ and $b$, 
we consider the associated Bloch variety,
\begin{equation}\label{BHarp}
\begin{array}{rll} B := & \{ (\xi_1, \xi_2, \lambda)  \in   
\mathbb{C}^* \times \mathbb{C}^* \times \mathbb{C}
 \, | \, \exists \, \, \psi \neq 0 \, \, s.t. \,\, H \psi =\lambda \psi, \\  & \psi(m +a,n) =
\xi_1 \psi(m,n), \, \psi(m ,n +b) = \xi_2 \psi(m,n) \}.
\end{array}
\end{equation}
The Bloch variety is composed by all possible complex loci that can be
reached by analytic continuation of energy band functions with boundary conditions.

Due to the irrationality of the magnetic fluxes $\alpha$ and $\beta$, the loci determined by 
\eqref{BHarp} consists of a  countable collection of algebraic varieties. More precisely,
the Bloch variety in this case is an inductive limit of finite dimensional algebraic varieties,
we call it the Bloch ind-variety.

Using Fourier transformation, the Bloch ind-variety can be written as $B = \bigcup_{k,\ell, m,n} B_{k,\ell, m,n}$ 
with nonsingular subvarieties as components:
\begin{equation}\label{Bklmn}
B_{k,\ell,m,n} := \{ (\xi_1, \xi_2, \lambda) \in \mathbb{C}^* \times \mathbb{C}^* \times  \mathbb{C}~ | ~ N^{k,\ell}_{m,n}(\xi_1, \xi_2)
- \lambda = 0 \}
\end{equation}
where 
\begin{equation}\label{hatMkl}
 N^{k,\ell}_{m,n}(\xi_1,\xi_2):=  e^{ 2 \pi i\alpha (n+\ell b)} \xi_1  + e^{ -2 \pi i\alpha (n+\ell b)} \xi_1^{-1} +  
 e^{ 2 \pi i\beta (m+ka)} \xi_2 + e^{ -2 \pi i\beta (m+ka)} \xi_2^{-1} 
\end{equation}
for $(k, \ell, m,n) \in \mathbb{Z} \times \mathbb{Z} \times \mathbb{Z}_a \times  \mathbb{Z}_b $.

Now it is convenient to work on the approximating components of the Bloch ind-variety, and more details about its 
compactification by blow-ups can be found in \cite{L11}.


Consider the projection $\pi: B \rightarrow
\mathbb{C}$; $(\xi_1, \xi_2,\lambda) \mapsto \lambda$, if affine Fermi curves are defined by $F_\lambda(\mathbb{C}) :=
\pi^{-1}(\lambda)$, then each Fermi curve $ F_\lambda = \bigcup_{k,\ell, m,n}F_\lambda^{k,\ell,m,n}$ is an ind-variety with components 
\begin{equation}\label{Fklmn}
F_\lambda^{k,\ell,m,n} = \{ (\xi_1, \xi_2) |  N^{k,\ell}_{m,n}(\xi_1,\xi_2) = \lambda  \}.
\end{equation}
In \cite{PS89}, the authors used the parametrization $\lambda= \sigma + \sigma^{-1}$ based on the lattice structure in $H^2$.

For each $4$-tuple $(k,\ell,m,n)$, there exist singular fibers at $\lambda = 0, 4, -4$.  
It is easy to see that $F_0^{k,\ell,m,n}$ has two components:
\begin{equation}
 \begin{array}{ll}
 \{ (\xi_1, \xi_2) |  e^{ 2 \pi i\alpha (n+\ell b)} \xi_1 +  e^{ 2 \pi i\beta (m+ka)} \xi_2  = 0 \},\\
  \{ (\xi_1, \xi_2) | e^{ 2 \pi i\alpha (n+\ell b)} \xi_1 + e^{ -2 \pi i\beta (m+ka)} \xi_2^{-1} = 0  \}.
 \end{array}
\end{equation}
 $F_4^{k,\ell,m,n}$ is an irreducible curve with singularity only at  $
(e^{ -2 \pi i\alpha (n+\ell b)}, e^{ -2 \pi i\beta (m+ka)})$, similarly $F_{-4}^{k,\ell,m,n}$ is singular 
only at  $(- e^{ -2 \pi i\alpha (n+\ell b)}, - e^{ -2 \pi i\beta (m+ka)})$, and these singularities are ordinary double points.
Under the involution $(\xi_1, \xi_2) \leftrightarrow (- \xi_1, -\xi_2)$, it is natural to consider energy level $\mu = \lambda^2/16$ with 
singular fibers at $0, 1, \infty$, this
parametrization is related to the modular property of ${}_2F_1(\frac{1}{2}, \frac{1}{2}; 1; 16 \mu)$ as in \cite{R99}.

The projective closure of each component $F_\lambda^{k,\ell,m,n}$ in $\mathbb{P}^1 \times \mathbb{P}^1$, 
denoted by $\bar{F}_\lambda^{k,\ell,m,n}$, is an elliptic curve for a generic $\lambda$, 
and the complement $\bar{F}_\lambda^{k,\ell,m,n} \setminus F_\lambda^{k,\ell,m,n}$ is a divisor of type $(2,2)$. 
In other words, we have an elliptic fibration of the projective closure of each $B_{k,\ell,m,n}$, denoted by $\bar{B}_{k,\ell,m,n}$,  
\begin{equation}
\bar{\pi}: \bar{B}_{k,\ell,m,n} \rightarrow \mathbb{P}^1, \quad \bar{\pi}^{-1}(\lambda) = \bar{F}_\lambda^{k,\ell,m,n}
\end{equation} 
By the same statement as in \cite{GKT93}, the above fibration gives rise to a stable family of elliptic curves 
with four exceptional fibers, at $\lambda= \pm 4$ (type $I_1$), at $\lambda = 0$ (type $I_2$) and at $\lambda = \infty$ 
(type $I_8$). 
Hence the global monodromy group is $\Gamma_0(8) \cap \Gamma_0^0(4)$, and the details can be found in \cite{Beauville82}.


The local monodromy of the family of elliptic curves has been discussed in  \cite{GKT93}.
Here we have a similar result for the family of Fermi curve components in terms of Picard-Lefschetz formula. 
Let $\delta$ be a vanishing cycle,  recall that the Picard-Lefschetz transformation is given by 
\begin{equation}
T(x) = x - (x \cdot \delta) \delta 
\end{equation} 
for an arbitrary homology cycle $x$ and $x \cdot \delta$ is the intersection number between cycles with chosen orientation.

\begin{lemma} \label{lem1}
The local monodromies around $ \lambda = 4, 0, -4$ are given by the Picard-Lefschetz transformations $T_4, T_0, T_{-4}$,
\begin{equation}
T_4(\gamma) = \gamma + \delta_1,
  \quad
T_{-4}(\gamma) = \gamma - \delta_2,
 \quad
T_0(\gamma) = \gamma - \delta_1 + \delta_2,
\end{equation}
where $\delta_1 , \delta_2$ are vanishing cycles and $\gamma$ is an arbitrary homology cycle.
\end{lemma}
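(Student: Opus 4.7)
The plan is to localize around each critical value $\lambda_0 \in \{4, 0, -4\}$, identify the vanishing cycles arising from the nodes of the singular fiber, and apply the Picard-Lefschetz formula directly.

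For $\lambda_0 = 4$, the earlier discussion identifies a single ordinary double point of $\bar F^{k,\ell,m,n}_4$ at $(e^{-2\pi i\alpha(n+\ell b)},\, e^{-2\pi i\beta(m+ka)})$. Holomorphic Morse coordinates bring the local family into the standard form $xy = 4-\lambda$, so the vanishing cycle $\delta_1$ is the real circle $|x|=|y|=\sqrt{|4-\lambda|}$ on the nearby smooth fiber. The Picard-Lefschetz formula
\[
T_4(\gamma) = \gamma - (\gamma\cdot \delta_1)\,\delta_1
\]
then yields $T_4(\gamma) = \gamma + \delta_1$ once the orientation of $\delta_1$ is fixed so that the transverse cycle $\gamma$ pairs to $-1$ with it. The same argument at the node of $\bar F^{k,\ell,m,n}_{-4}$ gives $T_{-4}(\gamma) = \gamma - \delta_2$; the opposite sign records the orientation of $\delta_2$ forced by the involution $(\xi_1,\xi_2)\mapsto(-\xi_1,-\xi_2)$ which exchanges the two nodes.

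For $\lambda_0 = 0$, the fiber has two rational components meeting transversally at two points, which a direct calculation places at $(a_1^{-1},-a_2^{-1})$ and $(-a_1^{-1},a_2^{-1})$ with $a_1=e^{2\pi i\alpha(n+\ell b)}$, $a_2=e^{2\pi i\beta(m+ka)}$; this realizes the Kodaira type $I_2$ fiber. Each node is an ordinary double point of the total space and contributes its own vanishing cycle. I would identify these two cycles with the $\delta_1, \delta_2$ already fixed above by parallel transport along a small arc connecting $0$ to a common reference value in the base. Since the two nodes are geometrically disjoint, $\delta_1\cdot\delta_2=0$, so applying Picard-Lefschetz at each node and composing gives
\[
T_0(\gamma) = \gamma - (\gamma\cdot\delta_1)\,\delta_1 - (\gamma\cdot\delta_2)\,\delta_2 = \gamma - \delta_1 + \delta_2,
\]
under the intersection numbers $\gamma\cdot\delta_1 = 1$ and $\gamma\cdot\delta_2 = -1$ forced by the orientation conventions already fixed at $\lambda=\pm 4$.

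The main obstacle is organizational rather than computational: one has to fix a basepoint, a distinguished smooth fiber there, and consistently oriented small loops around $\lambda=4,0,-4$, and then verify that the vanishing cycles produced at the three critical values assemble into the same basis $\{\delta_1,\delta_2\}$ with the pairing matrix used above. Once this bookkeeping is in place each of the three formulas is a one-line application of Picard-Lefschetz. A useful cross-check is that the product $T_4 T_0 T_{-4}$ should be conjugate to the inverse of the $I_8$ monodromy at $\lambda=\infty$, hence to $\bigl(\begin{smallmatrix}1 & 8\\ 0 & 1\end{smallmatrix}\bigr)$, matching the global monodromy group $\Gamma_0(8)\cap\Gamma_0^0(4)$ of the Beauville family and giving an a posteriori test of the sign choices.
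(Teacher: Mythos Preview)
Your proposal is correct and reaches the same conclusion, but the route differs from the paper's. The paper works concretely: after the change of variables $\xi=e^{2\pi i\alpha(n+\ell b)}\xi_1$, $\eta=e^{2\pi i\beta(m+ka)}\xi_2$, it writes the curve as a quadratic in $\eta$, extracts the discriminant $\Delta=(\xi^2-\lambda\xi+1)^2-4\xi^2$, and reads off four explicit branch points on the $\xi$-line. It then tracks which \emph{pair} of branch points coalesces as $\lambda\to 4,-4,0$ and defines $\delta_1,\delta_2$ as the loops around those pairs. The payoff is that at $\lambda=0$ the two colliding pairs are literally the same labeled pairs that vanish at $\lambda=4$ and $\lambda=-4$, so the identification of the two vanishing cycles at $0$ with $\delta_1,\delta_2$ is automatic rather than requiring parallel transport. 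Your abstract local-model argument is conceptually cleaner and more portable, but it pushes exactly that identification into the ``organizational'' step you flag; the paper's branch-point picture does that bookkeeping for free. Your global consistency check against the $I_8$ monodromy at $\infty$ is a useful addition not present in the paper.
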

\begin{proof}
If we change the variables by  $\xi = e^{ 2 \pi i\alpha (n+\ell b)} \xi_1$ and $\eta = e^{ 2 \pi i\beta (m+ka)} \xi_2$, 
then the Fermi curve components become 
\begin{equation}
F_\lambda^{k,\ell,m,n} = \{ (e^{ -2 \pi i\alpha (n+\ell b)}\xi, e^{- 2 \pi i\beta (m+ka)}\eta) |  \xi + \xi^{-1}+ \eta + \eta^{-1}= \lambda  \}
\end{equation}
Rewrite the above curve as $\xi \eta^2 + (\xi^2 - \lambda \xi +1)\eta + \xi  = 0$ with discriminant $\Delta= (\xi^2-\lambda\xi +1)^2 - 4\xi^2$. Set $\Delta = 0$, there exist four branch points $\{ \frac{(\lambda +2)\pm \sqrt{\lambda^2 +4\lambda} }{2},  \frac{(\lambda -2)\pm \sqrt{\lambda^2 -4\lambda} }{2} \} $ on the $\xi$-plane. 

When $\lambda$ tends to $4$, $ \frac{(\lambda -2)\pm \sqrt{\lambda^2 -4\lambda} }{2}$ shrink to be an ordinary double point
 at $\xi = 1$, which corresponds to $(\xi_1, \xi_2) = (e^{ -2 \pi i\alpha (n+\ell b)}, e^{ -2 \pi i\beta (m+ka)})$. 
 And the cycle around  $ \frac{(\lambda -2)\pm \sqrt{\lambda^2 -4\lambda} }{2}$ is a vanishing cycle, call it $\delta_1$. Assume $\gamma$ is a cycle such that $\delta_1 \cdot \gamma = 1$, we rotate $\delta_1$ once and interchange the points $ \frac{(\lambda -2)+ \sqrt{\lambda^2 -4\lambda} }{2} \longleftrightarrow \frac{(\lambda -2) - \sqrt{\lambda^2 -4\lambda} }{2}$, by the Picard-Lefschetz transformation $T_4(\gamma)= \gamma + \delta_1 $.

When $\lambda$ tends to $-4$, $ \frac{(\lambda +2)\pm \sqrt{\lambda^2 +4\lambda} }{2}$ shrink to be an ordinary double point 
at $\xi = -1$, which corresponds to $(\xi_1, \xi_2) = (-e^{ -2 \pi i\alpha (n+\ell b)}, -e^{ -2 \pi i\beta (m+ka)})$.
And the cycle around  $ \frac{(\lambda +2)\pm \sqrt{\lambda^2 +4\lambda} }{2}$ is a vanishing cycle, call it $\delta_2$.
 $\gamma$ is the same cycle as above, so $\delta_2 \cdot \gamma = -1$, we rotate $\delta_2$ once and interchange the points 
 $ \frac{(\lambda +2)+ \sqrt{\lambda^2 +4\lambda} }{2} \longleftrightarrow \frac{(\lambda +2) - \sqrt{\lambda^2 +4\lambda} }{2}$, by the Picard-Lefschetz transformation $T_{-4}(\gamma)= \gamma - \delta_2$.

When $\lambda$ tends to $0$, $ \frac{(\lambda +2) \pm \sqrt{\lambda^2 +4\lambda}}{2}$ shrink to be one ordinary double point 
at $\xi = 1$ corresponding to $(\xi_1, \xi_2) = (e^{ -2 \pi i\alpha (n+\ell b)}, -e^{ -2 \pi i\beta (m+ka)})$, and 
at the same time $ \frac{(\lambda -2)\pm \sqrt{\lambda^2 -4\lambda} }{2}$ shrink to be the other ordinary double point 
at $\xi = -1$ corresponding to $(\xi_1, \xi_2) = (-e^{ -2 \pi i\alpha (n+\ell b)}, e^{ -2 \pi i\beta (m+ka)})$. 
The cycles  $\delta_1, \delta_2, \gamma$ are the same as above, so by the Picard-Lefschetz transformation 
$T_0(\gamma)= \gamma - \delta_1 + \delta_2 $.
\end{proof}

\section{Picard-Fuchs equation of density of states } \label{PFeq}

One of the main results of \cite{L11} is that  the density of states  on each component
of the Fermi curve can be identified with a period integral independent of the irrational parameters $\alpha, \beta$. In this section,
we will give the Picard-Fuchs equation of the density of states and relate the density of states to the modular subgroup $\Gamma(2)$. 

Recall that the density of states (DOS) on each component of the Fermi curve $F_\lambda^{k,\ell,m,n}$ 
 is of the form,
\begin{equation}\label{doseq}
DOS =  \frac{1}{2\pi^2 ab}(1+k) K(k) =  \frac{1 }{2\pi^2 ab} K(\frac{2\sqrt{k}}{1+k})
\end{equation}
where $k = \frac{1-\varepsilon}{1+\varepsilon}$ is the elliptic modulus.
 To avoid confusion with the elliptic modular lambda function, we write  the energy level as $\varepsilon = |\lambda|/4$ 
 instead of $\lambda$  used in the previous section.  
 The second equality in \eqref{doseq} is the ascending Landen transformation,
  which changes the corresponding period ratio from $\tau$ to $\tau/2$.

The complete elliptic integral of the first kind $K(k)$, also called the quarter period, satisfies a second order differential equation
\begin{equation}
k(1-k^2) \frac{d^2K}{dk^2} +(1-3k^2) \frac{dK}{dk} - k K = 0 
\end{equation}
with regular singularities at $k = -1, 0, 1, \infty$.
Indeed, if we change the variable by the elliptic modular lambda function $\lambda = k^2$ in $K$, then 
the above differential equation is equivalent to 
\begin{equation}
\lambda(1-\lambda)\frac{d^2K}{d\lambda^2} + (1- 2\lambda) \frac{dK}{d\lambda} - \frac{K}{4} = 0
\end{equation}
which is the Picard-Fuchs equation of the Legendre family of elliptic curves: $y^2 = x(x-1)(x- \lambda)$, as a special case of
hypergeometric differential equation. 

If we define 
\begin{equation}
D(k) := 2\pi^2 ab ~DOS = (1+k) K(k),
\end{equation}
then $D(k)$ satisfies a differential equation. Or equivalently,

\begin{prop}
The density of states  satisfies a second order differential equation 
\begin{equation}\label{DOSeq}
k(1-k)(1+k)^2\frac{d^2D}{dk^2} + (1-2k -k^2)(1+k)\frac{dD}{dk} + (k-1)D= 0 .
\end{equation}
From now on, we call it the Picard-Fuchs equation of  density of states.
\end{prop}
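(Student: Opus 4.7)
The plan is to derive the ODE for $D(k)=(1+k)K(k)$ directly from the known second-order ODE for the quarter period $K(k)$ stated just above the proposition, namely
\begin{equation*}
k(1-k^2)K''+(1-3k^2)K'-kK=0.
\end{equation*}
Since $D$ is an explicit elementary multiple of $K$, no geometric input is needed at this stage: the claim reduces to a purely algebraic verification using the Leibniz rule. In particular, since the hypergeometric/Picard--Fuchs equation for $K$ already encodes the Gauss--Manin connection of the Legendre family, the resulting ODE for $D$ is automatically of Fuchsian type with the same singular locus (up to the factor $1+k$), so there is nothing delicate about the singularities.

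First I would differentiate the product $D=(1+k)K$ twice, obtaining $D'=K+(1+k)K'$ and $D''=2K'+(1+k)K''$. Substituting these into the left-hand side of the target equation
\begin{equation*}
k(1-k)(1+k)^2 D''+(1-2k-k^2)(1+k)D'+(k-1)D
\end{equation*}
gives an expression that is linear in $K$, $K'$, $K''$ with polynomial coefficients in $k$. The key observation is that the coefficient of $K''$ is $k(1-k)(1+k)^3=(1+k)^2\cdot k(1-k^2)$, so multiplying the $K$-equation by $(1+k)^2$ lets me eliminate $K''$ via $k(1-k^2)K''=-(1-3k^2)K'+kK$.

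After this substitution the identity reduces to showing that two polynomial expressions, namely the total coefficient of $K'$ and the total coefficient of $K$, both vanish identically. Each collapses to a common factor of $(1+k)$ or $(1+k)^2$ times a polynomial in $k$ which turns out to be zero; this is the routine but essential step that pins down the exact form of the ODE in \eqref{DOSeq}.

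The main obstacle, such as it is, is purely bookkeeping: correctly combining the contributions $(1-3k^2)(1+k)^2$, $2k(1-k)(1+k)^2$ and $(1-2k-k^2)(1+k)^2$ in the $K'$-coefficient, and $k(1+k)^2$, $(1-2k-k^2)(1+k)$ and $(k-1)(1+k)$ in the $K$-coefficient, and checking that each sum vanishes. There is no conceptual hurdle; the only thing to guard against is arithmetic error in these polynomial identities. Once both coefficients are confirmed to be zero, the Picard--Fuchs equation for $D$ follows, and one may remark that its singular points sit at $k=-1,0,1,\infty$ just as for $K$, in agreement with the Landen transformation $k\mapsto\tfrac{2\sqrt{k}}{1+k}$ used to rewrite $DOS$.
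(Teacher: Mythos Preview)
Your approach is correct and, in fact, slightly more economical than the paper's. You use only the second-order ODE for $K$ (already recorded just above the proposition) and perform the change of dependent variable $K\mapsto D=(1+k)K$ via the Leibniz rule, reducing the claim to the vanishing of two polynomial combinations. The paper instead introduces the complementary integral $E(k)$ and the first-order system
\[
K'=\frac{E}{k(1-k^2)}-\frac{K}{k},\qquad E'=\frac{E-K}{k},
\]
computes $k(1-k)D'=E+(k-1)K$, differentiates once more to get $k(1-k)D''+(1-2k)D'=E/(1+k)$, and then eliminates $E$ using the first relation. Your route avoids the auxiliary $E$ altogether and makes it transparent that the result is just the standard gauge transformation of a Fuchsian equation; the paper's route is more constructive (it \emph{finds} the equation rather than verifying a given one) and keeps the connection to the pair $(K,E)$ visible, which is natural given the later use of Landen transformations. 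Either way the computation is short, and your sketch of the cancellation in the $K'$- and $K$-coefficients is accurate.
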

\begin{proof}
Recall the derivatives of $K(k)$ and  $E(k)$ (the complete elliptic integral of the second kind),
\begin{equation}
\frac{dK}{dk} = \frac{E(k)}{k(1-k^2)}-\frac{K(k)}{k}, \quad  
\frac{dE}{dk}=\frac{E(k)-K(k)}{k} 
\end{equation}
Substitute into the first derivative of $D(k)$,
\begin{equation}
\frac{dD}{dk}=(1+k)\frac{dK}{dk} + K(k) 
                      = \frac{E(k)}{k(1-k)} - \frac{K(k)}{k}
\end{equation}
In other words,
\begin{equation}\label{dD}
k(1-k)\frac{dD}{dk} = E(k)+ (k-1)K(k)
\end{equation}
Then
\begin{equation}
\frac{d}{dk}[k(1-k)\frac{dD}{dk}] = \frac{dE}{dk}+ (k-1)\frac{dK}{dk}+ K(k)
\end{equation}

\begin{equation}
k(1-k)\frac{d^2D}{dk^2} + (1-2k)\frac{dD}{dk} = \frac{E-K}{k}+ (k-1)(\frac{E}{k(1-k^2)}-\frac{K}{k}) +K =\frac{E}{1+k}
\end{equation}
Here we use (\ref{dD}) again to cancel $E(k)$, so 
\begin{equation}
k(1-k)\frac{d^2D}{dk^2} + (1-2k)\frac{dD}{dk} -\frac{k(1-k)}{1+k}\frac{dD}{dk} + \frac{k-1}{(1+k)^2}D = 0
\end{equation}

\end{proof}

Instead of the normal derivative $\frac{d}{dk}$, the logarithmic derivative operator $\Theta = k \frac{d}{dk}$ is widely used,
so the above Picard-Fuchs equation gives rise to a differential operator,
 \begin{equation}
   \Theta^2 +\frac{2k}{k^2 -1}\Theta - \frac{k}{(1+k)^2}.
 \end{equation}

\begin{cor}
  With respect to the energy level $\varepsilon$, the Picard-Fuchs equation of density of states is written as
  \begin{equation}
     \varepsilon(1-\varepsilon^2)\frac{d^2D}{d\varepsilon^2} + (1 - 3\varepsilon^2) \frac{dD}{d\varepsilon} - \varepsilon D = 0
  \end{equation}
 By setting $\vartheta = \varepsilon \frac{d}{d\varepsilon}$, it is equivalent to
  \begin{equation}
    [\vartheta^2 - \varepsilon^2(\vartheta + 1)^2] D = 0
  \end{equation}
\end{cor}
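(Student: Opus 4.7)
The plan is to avoid a brute substitution $k = (1-\varepsilon)/(1+\varepsilon)$ into the Picard--Fuchs equation (\ref{DOSeq}) and instead exploit the Landen identity already recorded in (\ref{doseq}) together with the $k \leftrightarrow k'$ symmetry of the Legendre equation. First I would observe that the second equality in (\ref{doseq}) reads $D(k) = (1+k)K(k) = K\!\left(\frac{2\sqrt{k}}{1+k}\right)$, and that under $k = (1-\varepsilon)/(1+\varepsilon)$ a short computation gives $\frac{2\sqrt{k}}{1+k} = \sqrt{1-\varepsilon^2}$. Hence, as a function of the energy level,
\[
D(\varepsilon) \;=\; K\!\left(\sqrt{1-\varepsilon^{2}}\right),
\]
so that $D$ is $K$ evaluated at the modulus whose complementary modulus is precisely $\varepsilon$.

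Next, the Legendre equation $k_1(1-k_1^2)K'' + (1-3k_1^2)K' - k_1 K = 0$, which was stated in the paragraph preceding the Proposition, is well known to admit both $K(k_1)$ and $K(k_1')$ as solutions, and is therefore symmetric under $k_1 \leftrightarrow k_1'$ with $k_1^2 + k_1'^2 = 1$. Applying this symmetry with $k_1' = \varepsilon$ immediately yields
\[
\varepsilon(1-\varepsilon^{2})\frac{d^{2}D}{d\varepsilon^{2}} + (1-3\varepsilon^{2})\frac{dD}{d\varepsilon} - \varepsilon D \;=\; 0,
\]
which is the first equation of the corollary. If one prefers not to invoke the symmetry by name, a one-line chain-rule check with $k_1 = \sqrt{1-\varepsilon^2}$ and $dk_1/d\varepsilon = -\varepsilon/k_1$ produces the same result.

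For the operator form, I would multiply the equation by $\varepsilon$ and then use $\vartheta D = \varepsilon D'$ and $\varepsilon^{2} D'' = \vartheta^{2}D - \vartheta D$. Collecting the coefficients of $\vartheta^2 D$, $\vartheta D$, and $D$ produces
\[
(1-\varepsilon^{2})(\vartheta^{2}-\vartheta)D + (1-3\varepsilon^{2})\vartheta D - \varepsilon^{2} D \;=\; 0,
\]
which simplifies to $\vartheta^{2}D - \varepsilon^{2}(\vartheta^{2}+2\vartheta+1)D = 0$, i.e.\ $[\vartheta^{2} - \varepsilon^{2}(\vartheta+1)^{2}]D = 0$. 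There is no genuine obstacle in the argument: the only places requiring care are verifying the Landen collapse $\frac{2\sqrt{k}}{1+k} = \sqrt{1-\varepsilon^2}$ and keeping the $\varepsilon$-powers straight in the final $\vartheta$-conversion. Everything else is a formal consequence of material already proved.
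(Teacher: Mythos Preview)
Your argument is correct. The paper does not supply a separate proof of the Corollary; it simply states the result and then, in the paragraph and Proposition that follow, remarks that the equation for $D$ in $\varepsilon$ coincides with the Legendre equation for the quarter period as a ``natural consequence derived from Landen transformations,'' recording $L_{k_1'}K(k_1)=L_{k_1'}K(k_1')=0$ with $k_1'=(1-k)/(1+k)=\varepsilon$ and $k_1=2\sqrt{k}/(1+k)=\sqrt{1-\varepsilon^2}$. Your derivation makes exactly this reasoning explicit---identifying $D(\varepsilon)=K(\sqrt{1-\varepsilon^2})$ via Landen and then invoking the $K(k)/K(k')$ pair for the Legendre operator---so it is the same route the paper has in mind, just written out in full. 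The $\vartheta$-form computation is likewise a routine check that the paper omits.
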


Thus the Picard-Fuchs equation of density of states is the same as that of the quarter period. Actually this is a natural consequence 
derived from Landen transformations. Let $L_k$ be the differential operator,
   \begin{equation}
     L_k = k(1-k^2) \frac{d^2}{dk^2} +(1-3k^2) \frac{d}{dk} - k 
   \end{equation}
   then the quarter period and its  complementary period give a fundamental set of independent solutions, i.e. $L_k K(k) = L_k K(k') = 0$, where $k^2 + k'^2 =1$.

\begin{prop}
  After Landen transformations, we still have 
  \begin{equation}
    L_{k_1'}K(k_1) = L_{k_1'}K(k_1') = 0
  \end{equation}
where 
\begin{equation}
  k_1 = \frac{2\sqrt{k}}{1+k}, \quad k_1' = \frac{1-k}{1+k}, \quad k_1^2 + k_1'^2 =1
\end{equation}
\end{prop}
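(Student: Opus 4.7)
The plan is to leverage the ascending Landen identity $K(k_1) = (1+k)K(k) = D(k)$ together with the Picard-Fuchs equation \eqref{DOSeq} already established for $D$ in the variable $k$. Once $L_{k_1'} K(k_1) = 0$ is shown, the complementary equality $L_{k_1'} K(k_1') = 0$ is automatic from the general fact recalled just before the proposition, namely that the second-order operator $L_{\bullet}$ annihilates both the quarter period and its complementary period. So the substantive content is the first equality, and the natural route is to change variables from $k$ to $k_1'$ in \eqref{DOSeq} and check that the resulting equation is exactly $L_{k_1'} D = 0$.

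First I would collect the elementary identities coming from $k_1' = \tfrac{1-k}{1+k}$ (equivalently $k = \tfrac{1-k_1'}{1+k_1'}$), namely
\[
1+k = \frac{2}{1+k_1'},\qquad 1 - k_1'^2 = \frac{4k}{(1+k)^2},\qquad 1-3k_1'^2 = -\frac{2(1-4k+k^2)}{(1+k)^2},\qquad \frac{dk}{dk_1'} = -\frac{(1+k)^2}{2}.
\]
With these in hand I would expand the chain-rule expressions for $\tfrac{dD}{dk_1'}$ and $\tfrac{d^2 D}{dk_1'^2}$ in terms of $D_k$ and $D_{kk}$, substitute them into
\[
L_{k_1'} D \;=\; k_1'(1-k_1'^2)\,D_{k_1' k_1'} + (1-3k_1'^2)\,D_{k_1'} - k_1' D,
\]
and collect terms. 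I expect the three coefficients produced by $L_{k_1'}$ to be $k(1-k)(1+k)$, $(1-2k-k^2)$, and $-\tfrac{1-k}{1+k}$, so that after multiplying through by $(1+k)$ one recovers exactly the left-hand side of \eqref{DOSeq}. Since $D(k)$ was proven there to be annihilated by that operator, this yields $L_{k_1'} D = 0$, which is $L_{k_1'} K(k_1) = 0$ via the Landen identity.

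For the second identity I would simply invoke the fact recalled in the paragraph preceding the proposition: $L_k K(k) = L_k K(k') = 0$ with $k^2 + k'^2 = 1$. Renaming the dummy variable $k$ to $k_1'$ and using that by definition $\sqrt{1-k_1'^2} = k_1$, this renaming gives $L_{k_1'} K(k_1') = 0$ at once, completing the proof.

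The only real obstacle is the bookkeeping in the change-of-variables computation, especially expanding $\tfrac{d^2 D}{dk_1'^2}$ via the chain rule and then matching the coefficients of $D_{kk}$, $D_k$, and $D$ in $(1+k)\,L_{k_1'} D$ with those in \eqref{DOSeq}. This is routine but error-prone; pre-computing the four identities above (particularly the factorised forms of $1-k_1'^2$ and $1-3k_1'^2$) makes the final collection of terms essentially mechanical, and no more subtle Landen-theoretic input is required.
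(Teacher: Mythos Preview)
The paper states this proposition without proof, treating it as an immediate consequence of the preceding material. Your argument is correct, but note that your second step---renaming the dummy variable $k$ to $k_1'$ in $L_k K(k) = L_k K(k') = 0$---already yields \emph{both} identities at once, since $k_1^2 + k_1'^2 = 1$ means $k_1$ is precisely the complementary modulus of $k_1'$. The change-of-variables computation in your first step is therefore unnecessary for the proposition as stated. It is also essentially a repetition of the Corollary just above: since $k_1' = \tfrac{1-k}{1+k} = \varepsilon$, the equation $L_{k_1'} K(k_1) = 0$ is exactly $L_\varepsilon D = 0$, which was the content of that Corollary. Your expected coefficients $k(1-k)(1+k)$, $(1-2k-k^2)$, $-\tfrac{1-k}{1+k}$ are correct and the computation does go through, but it re-derives what has just been established rather than adding anything new.
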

With the convenient notations $K' = K(k') $ and $K_1 = K(k_1)$, one has $K_1 = (1+k) K(k)$, i.e. $K = ({(1+k_1')}/{2}) K_1$.
Look at the corresponding quotients,
\begin{equation}
  \frac{K'}{K} = 2 \frac{K_1'}{K_1}, \quad \text{since} ~~K' = (1+ k_1') K_1' = \frac{2K}{K'} K_1' 
\end{equation}
the half-period ratio $\tau = iK'/K$ changes as $\tau \mapsto \tau_1 = \tau/2$ according to the Landen transformation $k \mapsto k_1$.

By the classical results on the elliptic lambda function, we have $D = K_1 =  \tfrac{\pi}{2} \,{}_2F_1 \left(\tfrac{1}{2}, \tfrac{1}{2}; 1; 1- \varepsilon^2\right)$.
Similar to the nome $q = e^{i \pi \tau}$, define $q_1 = e^{i \pi \tau_1} = \sqrt{q}$, so the density of states has a Lambert series 
\begin{equation}
      DOS = \frac{1}{2 \pi^2 ab} [\frac{\pi}{2} + 2\pi\sum_{n=1}^\infty \frac{q_1^n}{1+q_1^{2n}}] = \frac{1}{4 \pi ab} [1 + 4\sum_{n=1}^\infty \frac{q^{n/2}}{1+q^{n}}].\, 
\end{equation}
And $\mu = \varepsilon^2 $ 
has a q-expansion about the infinite cusp

\begin{equation}
   \varepsilon^2(\tau) =1-16( q^{1/2}-8q+44q^{3/2}-192q^2+718q^{5/2}- \cdots ) 
\end{equation}
It is easy to see that
\begin{equation}
  1-\mu(4 \tau) = 1-\varepsilon^2(2 \tau ) = \lambda (2\tau) = k^2(\tau)
\end{equation}
i.e. $1-\varepsilon^2(\tau)$ is a Hauptmodul for the modular curve $X(2)$, $DOS(\varepsilon)$ is a weight-1 modular form  for $\Gamma(2)$.
The relation between $\varepsilon$ and $j$-invariant is then 
\begin{equation}
  j(\tau) =256 \frac{(\varepsilon^4 - \varepsilon^2 +1)^3}{\varepsilon^4(\varepsilon^2 -1)^2} (\tau) 
  = 16 \frac{(\varepsilon^4 -16\varepsilon^2 + 16)^3}{\varepsilon^8 (1-\varepsilon^2 )} (2 \tau)
\end{equation}

\section{mirror map and energy level}

In order to get a $q$-expansion of the energy level $\varepsilon$, we continue to look into 
the Picard-Fuchs equation of the density of states from a different point of view.
Since the monodromy representation for solutions of the Picard-Fuchs equation is the same as the geometric monodromy representation, we consider the local monodromy of $D(k) = K(2\sqrt{k}/(1+k))$, i.e. the monodromy of 
\begin{equation}
y^2 =(t^2 - 1)(\frac{4k}{(1+k)^2}t^2 -1) 
\end{equation}
The corresponding canonical holomorphic form is
\begin{equation}
\omega = \frac{dt}{y} = \frac{dt}{\sqrt{(t^2 - 1)(\frac{4k}{(1+k)^2}t^2 -1)} }
                      = \frac{ds}{\sqrt{(s^2 - 1)(s^2 - \frac{4k}{(1+k)^2} )} }
\end{equation}
where $ s=-1/t$. We have an equivalent statement as Lemma \ref{lem1} under the change of variable $k = \frac{1-\varepsilon}{1+\varepsilon}$.

\begin{lemma} \label{lem2}
The local monodromies around $ k = 0, 1$ are given by the Picard-Lefschetz transformations $S_0, S_1$,
\begin{equation}
S_0(\gamma) = \gamma + \delta_1,
  \quad
S_1(\gamma) = \gamma + 2\delta_2 -2\delta_3,
\end{equation}
\end{lemma}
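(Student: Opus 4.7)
The plan is to compute the monodromies directly from the double-cover presentation implicit in the paper: after the change of variable $s=-1/t$ used just before the lemma, the family is $Y^{2}=(s^{2}-1)(s^{2}-k_{1}^{2})$ with $k_{1}^{2}=4k/(1+k)^{2}$, and its four branch points on the $s$-line are $\pm1$ and $\pm k_{1}$. The key algebraic input is the identity
\begin{equation}
k_{1}^{2}-1 \;=\; \frac{4k-(1+k)^{2}}{(1+k)^{2}} \;=\; -\frac{(1-k)^{2}}{(1+k)^{2}},
\end{equation}
which shows that the auxiliary parameter $m:=k_{1}^{2}$ is biholomorphic with $k$ near $k=0$ but ramified of order two near $k=1$. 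The rest of the argument is a Picard--Lefschetz computation in the style of Lemma~\ref{lem1}.

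For the monodromy around $k=0$ I would reparametrize by $m=k_{1}^{2}$; since $m\approx 4k$ at the origin, a small loop around $k=0$ maps to a simple loop around $m=0$. As $m\to 0$ the two branch points $\pm\sqrt{m}$ collide at the origin of the $s$-plane, producing a single ordinary double point whose vanishing cycle I call $\delta_{1}$. Fixing $\gamma$ with $\delta_{1}\cdot\gamma=1$ as in Lemma~\ref{lem1}, the Picard--Lefschetz formula $T(x)=x-(x\cdot\delta)\delta$ immediately yields $S_{0}(\gamma)=\gamma+\delta_{1}$.

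For the monodromy around $k=1$ the same substitution shows that a small loop around $k=1$ winds twice around $m=1$, because the identity above gives $m-1\sim -(k-1)^{2}/4$ there. At $m=1$ both pairs $(+k_{1},+1)$ and $(-k_{1},-1)$ collide simultaneously, producing two ordinary double points; the associated vanishing cycles $\delta_{2}$ (at $s=1$) and $\delta_{3}$ (at $s=-1$) are supported near distinct points of the $s$-plane, so $\delta_{2}\cdot\delta_{2}=\delta_{3}\cdot\delta_{3}=\delta_{2}\cdot\delta_{3}=0$. Orienting $\delta_{2},\delta_{3}$ compatibly with the conventions of Lemma~\ref{lem1}, a single loop around $m=1$ gives the combined Picard--Lefschetz transformation $\gamma\mapsto\gamma+\delta_{2}-\delta_{3}$, and applying this twice because of the ramification of $m$ in $k$ yields
\begin{equation}
S_{1}(\gamma) \;=\; (\gamma+\delta_{2}-\delta_{3})+(\delta_{2}-\delta_{3}) \;=\; \gamma+2\delta_{2}-2\delta_{3}.
\end{equation}

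The main obstacle is the factor of two in $S_{1}$: it has no direct analog in Lemma~\ref{lem1} and would be missed by a careless single application of Picard--Lefschetz. The identity $k_{1}^{2}-1=-((1-k)/(1+k))^{2}$, which records the ramification of the Landen change of variable $k\mapsto k_{1}$, is what forces the doubling; once that observation is in place the remainder is routine bookkeeping using the vanishing self- and mutual intersections above.
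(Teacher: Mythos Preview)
Your argument is correct and follows the same route as the paper: analyse the branch points $\pm1,\ \pm k_{1}$ of the double cover $y^{2}=(s^{2}-1)(s^{2}-k_{1}^{2})$ and apply Picard--Lefschetz at the collisions occurring at $k=0$ and $k=1$. Your explicit identification of the quadratic ramification $k_{1}^{2}-1=-\bigl((1-k)/(1+k)\bigr)^{2}$ at $k=1$ is in fact more than the paper provides---the paper simply records the intersection numbers $\delta_{2}\cdot\gamma=1$, $\delta_{3}\cdot\gamma=-1$ and asserts $S_{1}(\gamma)=\gamma+2\delta_{2}-2\delta_{3}$ without isolating the source of the factor~$2$, so your version actually closes a small expository gap.
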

\begin{proof}
If $k$ approaches $0$, then $\pm 2\sqrt{k}/(1+k)$ shrink to $0$, which gives an ordinary double point at $s= 0$. So the cycle around $\pm 2\sqrt{k}/(1+k)$ is a vanishing cycle, call it $\delta_1$. Let $\gamma$ be a cycle intersects with $\delta_1$ so that $\delta_1 \cdot \gamma = 1$. Hence by the Picard-Lefschetz formula $S_0(\gamma) = \gamma + \delta_1$.

If $k$ approaches $1$, then $\pm 2\sqrt{k}/(1+k)$ go to $\pm 1$, which gives two ordinary double points at $s= \pm 1$. So the cycles around $ 2\sqrt{k}/(1+k), ~ 1$ and $ -2\sqrt{k}/(1+k), ~ -1$ are vanishing cycles, call them $\delta_2, ~\delta_3$, let $\gamma$ be a cycle intersect with $\delta_2$ and $\delta_3$ such that $\delta_2 \cdot \gamma = 1$ and  $\delta_3 \cdot \gamma = -1$. Similarly, by the Picard-Lefschetz formula $S_1(\gamma) = \gamma + 2\delta_2 -2\delta_3$.
\end{proof}


As a multiple valued function, the local period solution
could change according to the monodromy  if it goes along any loop around some regular singularity.  
It is better to consider the ratio of the period solutions since its exponential is invariant subject to the local monodromy. 
In the following we change the Picard-Fuchs equation of density of states into its Q-form and consider the related Schwarzian equation.

\begin{lemma}
By change of variable $D = UV$, the Picard-Fuchs equation  \eqref{DOSeq} is equivalent to
\begin{equation}\label{Qform}
\frac{d^2U}{dk^2} + \frac{(1+k^2)^2}{4k^2(1-k^2)^2}U = 0
\end{equation} 
the corresponding differential operator is 
\begin{equation}
\Theta^2 + \frac{k^2}{(1-k^2)^2}
\end{equation}
\end{lemma}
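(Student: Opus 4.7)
The strategy is the classical Liouville reduction of a second-order linear ODE to its normal (Q-)form: given an equation $D'' + P(k)D' + Q(k)D = 0$, the substitution $D = UV$ with $V = \exp\!\bigl(-\tfrac{1}{2}\int P\,dk\bigr)$ eliminates the first-derivative term and produces $U'' + \widetilde{Q}\,U = 0$ with $\widetilde{Q} = Q - \tfrac{1}{2}P' - \tfrac{1}{4}P^2$. The problem is therefore to identify $P$ and $Q$ explicitly, compute $V$, and verify that $\widetilde{Q}$ collapses to the stated coefficient.

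First I would divide \eqref{DOSeq} through by its leading coefficient $k(1-k)(1+k)^2$ to read off
\begin{equation*}
P(k) = \frac{1 - 2k - k^2}{k(1-k)(1+k)}, \qquad Q(k) = -\frac{1}{k(1+k)^2}.
\end{equation*}
A partial-fraction decomposition (all three residues being trivial to compute) yields the clean form
\begin{equation*}
P(k) = \frac{1}{k} - \frac{1}{1-k} - \frac{1}{1+k},
\end{equation*}
so that $\int P\, dk = \log\bigl(k(1-k)/(1+k)\bigr)$ and $V(k) = \sqrt{(1+k)/(k(1-k))}$.

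Next I would substitute this $P$ into $\widetilde{Q} = Q - \tfrac{1}{2}P' - \tfrac{1}{4}P^2$. Keeping $P$ in partial-fraction form is essential for tractability: $P'$ is a signed sum of $1/k^2$, $1/(1-k)^2$, $1/(1+k)^2$, and $P^2$ expands into those three squared terms together with three cross-products of the form $1/(k(1\pm k))$ and $1/((1-k)(1+k))$. After clearing everything (including $Q$) to the common denominator $4k^2(1-k)^2(1+k)^2$, the main obstacle is purely bookkeeping: the numerator's $k^1$ and $k^3$ coefficients must cancel, while the $k^0$, $k^2$, $k^4$ contributions must assemble into $1 + 2k^2 + k^4 = (1+k^2)^2$. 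The computation is unforgiving on signs but finite, and the expected cancellations pin down each intermediate simplification, so there is little risk of a silent error.

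Finally, the operator formulation is a rewriting in terms of $\Theta = k\, d/dk$. Using $k^2\, d^2/dk^2 = \Theta(\Theta - 1)$ and multiplying the normal form through by $k^2$ gives the displayed operator, in exact parallel with the passage from \eqref{DOSeq} to its $\Theta$-form recorded earlier in the section.
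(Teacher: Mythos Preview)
Your reduction to the Q-form is correct and is exactly the paper's approach: the paper's choice $\frac{d}{dk}\ln V = \frac{k^2+2k-1}{2k(1-k^2)}$ is nothing but $-\tfrac{1}{2}P(k)$ in your notation, so both arguments are the standard Liouville normalization, and your partial-fraction bookkeeping cleanly reproduces $\widetilde{Q}=\frac{(1+k^2)^2}{4k^2(1-k^2)^2}$.

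There is one slip in your last paragraph. Multiplying the $U$-equation by $k^2$ and using $k^2\,d^2/dk^2=\Theta(\Theta-1)$ gives
\[
\Theta^2 U-\Theta U+\frac{(1+k^2)^2}{4(1-k^2)^2}\,U=0,
\]
which is \emph{not} the displayed operator $\Theta^2+\frac{k^2}{(1-k^2)^2}$. The latter is the Q-form with respect to $\Theta$ rather than $d/dk$: one must absorb the residual $-\Theta$ by a further factor $U=\sqrt{k}\,W$ (equivalently, redo the Liouville trick for $\Theta^2+p\,\Theta+q$ with $p=\frac{2k}{k^2-1}$, $q=-\frac{k}{(1+k)^2}$), after which the constant term becomes
\[
\frac{(1+k^2)^2}{4(1-k^2)^2}-\frac14=\frac{k^2}{(1-k^2)^2}.
\]
So the two displayed formulas in the lemma are not a mere rewriting of one another; they correspond to two (compatible) choices of $V$ differing by $\sqrt{k}$.
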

\begin{proof}
Based on (\ref{DOSeq}), first define $V$ such that 
\begin{equation}
\frac{d}{dk}\ln V = \frac{k^2+2k -1}{2k(1-k^2)}
\end{equation}
and the second derivative of $V$ is
\begin{equation}
V'' = \frac{7k^2+2k - 3}{2k(1-k^2)}V' + \frac{1}{k(1-k)} V
\end{equation}
Then by substituting into the Picard-Fuchs equation, it is easy to get the Q-from with
\begin{equation}
Q(k) = \frac{(1+k^2)^2}{4k^2(1-k^2)^2}
\end{equation} 
\end{proof}

Recall that the Schwarzian derivative of a function $f(z)$ is defined by
\begin{equation}
 \{ f, z \}  = \left( \frac{f''(z)}{f'(z)}\right)'- \frac{1}{2} \left( \frac{f''(z)} {f'(z)} \right)^2,
\end{equation}
and it has a convenient inversion formula 
\begin{equation} \label{invfor}
\{ w, v\} = -\left(\frac{dw}{dv}\right)^2 \{ v, w\} .
\end{equation}

Suppose $D_1, D_2$ are two linearly independent local period solutions of the Picard-Fuchs equation (\ref{DOSeq}), 
and let $U_1, U_2$ be the corresponding solutions of the Q-from (\ref{Qform}), define $t = D_2/D_1 = U_2/U_1$, 
then $t$ satisfies the Schwarzian equation
\begin{equation}
\{ t, k \} = \frac{(1+k^2)^2}{2k^2(1-k^2)^2}
\end{equation}
By the inverse formula 
\begin{equation}
  \{ k, t \} = -k'(t)^2 \{ t, k \} = -\frac{k'^2(1+k^2)^2}{2k^2(1-k^2)^2}  
\end{equation}
Since the Schwarzian derivative 
is $GL(2, \mathbb{R})$ invariant, we have the following 

\begin{prop}
The energy level $\varepsilon$ satisfies a Schwarzian equation
\begin{equation}
  \{ \varepsilon, t \} =  -\frac{k'^2(1+k^2)^2}{2k^2(1-k^2)^2}  =  -\frac{\varepsilon'^2(1+\varepsilon^2)^2}{2\varepsilon^2(1-\varepsilon^2)^2}  
\end{equation}
\end{prop}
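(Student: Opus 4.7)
The plan is to derive $\{\varepsilon,t\}$ directly from the Schwarzian equation $\{t,k\}=(1+k^2)^2/(2k^2(1-k^2)^2)$ established just above the proposition, by combining the inversion formula with a Möbius change of variable from $k$ to $\varepsilon$.

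First I would invoke the inversion formula \eqref{invfor} to convert $\{t,k\}$ into $\{k,t\}$:
\[
\{k,t\}=-\left(\frac{dk}{dt}\right)^{2}\{t,k\}=-\frac{k'^{2}(1+k^2)^2}{2k^2(1-k^2)^2}.
\]
This is precisely the first of the two right-hand expressions in the statement.

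The conceptual step comes next: $k=(1-\varepsilon)/(1+\varepsilon)$ is a Möbius (fractional linear) transformation of $\varepsilon$, so the $GL(2,\mathbb{R})$-invariance of the Schwarzian derivative — applied to the upstairs (function) argument, as emphasized in the paragraph preceding the proposition — gives $\{k,t\}=\{\varepsilon,t\}$. Combined with the previous display, this yields the first equality of the proposition.

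Finally, for the second equality I would carry out a short change-of-variable calculation using
\[
1+k=\frac{2}{1+\varepsilon},\quad 1-k=\frac{2\varepsilon}{1+\varepsilon},\quad 1+k^2=\frac{2(1+\varepsilon^2)}{(1+\varepsilon)^2},\quad \frac{dk}{d\varepsilon}=-\frac{2}{(1+\varepsilon)^2},
\]
together with the chain rule $k'=(dk/d\varepsilon)\,\varepsilon'$. Substituting these into $-k'^{2}(1+k^2)^2/(2k^2(1-k^2)^2)$, the powers of $(1+\varepsilon)$ cancel cleanly and one recovers $-\varepsilon'^{2}(1+\varepsilon^2)^2/(2\varepsilon^2(1-\varepsilon^2)^2)$. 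The only real subtlety in the whole argument is bookkeeping about which argument of the Schwarzian the Möbius invariance acts on — here it is the function $k=k(t)$, not the parameter $t$ — so once that is pinned down no genuine obstacle remains.
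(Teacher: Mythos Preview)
Your proposal is correct and follows exactly the line the paper takes: the paper derives $\{k,t\}$ via the inversion formula and then invokes the $GL(2,\mathbb{R})$-invariance of the Schwarzian (post-composition by the M\"obius map $k\mapsto\varepsilon=(1-k)/(1+k)$) to pass to $\{\varepsilon,t\}$, leaving the algebraic rewriting in terms of $\varepsilon$ implicit. Your explicit substitution for the second equality simply fills in that omitted computation.
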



We now use the Frobenius 
method to find a set of local solutions,
write the Picard-Fuchs equation \eqref{DOSeq} as 
\begin{equation}
\frac{d^2D}{dk^2} + \frac{P_1}{k} \frac{dD}{dk} + \frac{P_2}{k^2} D= 0 
\end{equation}
with
\begin{equation}
P_1(k) = \frac{1-2k -k^2}{1-k^2}, \quad P_2(k) =  \frac{-k}{(1+k)^2}.
\end{equation}
 We want to find out two linearly independent solutions around $k=0$, which is a regular singular point. 
 First the indicial equation is given by 
\begin{equation}
r(r-1) + r P_1(0) + P_2(0) = r^2 =0.
\end{equation}
Thus $k=0$ has maximally unipotent monodromy, which agrees with Lemma \ref{lem2}. The indicial equation has repeated roots $r = 0$,
meaning that the local solutions around $k=0$ with normalization $D_1(0) = 1$ are given by
\begin{equation}
D_1 =  \sum_{n=0}^\infty d_n k^n, \quad D_2 =  D_1 ln \,k + \sum_{n=1}^\infty c_n k^n
\end{equation}

\begin{lemma}
By the Frobenius method, the coefficients are
\begin{equation} 
  \begin{array} {ll}
      d_{2n+1} =  d_{2n}=\left[ \frac{(2n)!}{2^{2n}(n!)^2}\right]^2 \\
     c_1 =0; ~~ c_{2n+1} = c_{2n} = 2 (\sum_{i=1}^{n} \frac{1}{i+n}) d_{2n}, ~n \geq 1 \\
   \end{array}
\end{equation}
\end{lemma}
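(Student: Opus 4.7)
The plan is to execute the Frobenius method at the repeated indicial root $r = 0$. I would first rewrite \eqref{DOSeq} in polynomial form as
\begin{equation*}
(k + k^2 - k^3 - k^4) D'' + (1 - k - 3k^2 - k^3) D' + (k - 1) D = 0,
\end{equation*}
substitute $D_1 = \sum_{n \geq 0} d_n k^n$ with $d_0 = 1$, and equate the coefficient of $k^n$ to zero. The coefficient of $k^0$ gives $d_1 = d_0 = 1$; the coefficient of $k^1$ gives $4 d_2 = 2 d_1 - d_0 = 1$; and for $n \geq 2$ a four-term recurrence
\begin{equation*}
(n+1)^2 d_{n+1} + (n^2 - 2n - 1) d_n + (2 - n^2) d_{n-1} - (n-2)^2 d_{n-2} = 0
\end{equation*}
emerges (with the convention $d_{-1} := 0$ at $n = 2$). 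A short induction then confirms the closed form $d_{2n} = d_{2n+1} = \bigl[\binom{2n}{n}/4^n\bigr]^2$; alternatively, since $D_1$ must coincide with the hypergeometric ${}_2F_1(\tfrac{1}{2}, \tfrac{1}{2}; 1; 4k/(1+k)^2)$ already exhibited in Section~\ref{PFeq}, one may expand $(1+k)^{-2m}$ as a binomial series and read off the Taylor coefficients directly.

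For the second solution I would write $D_2 = D_1 \ln k + E$ with $E = \sum_{n \geq 1} c_n k^n$ and apply the Picard--Fuchs operator $L$. Because $L D_1 = 0$, the $\ln k$ contributions cancel, and the derivatives of $\ln k$ reduce the problem to the inhomogeneous equation
\begin{equation*}
L[E] = 2(1+k) D_1 - 2(1-k^2)(1+k) D_1',
\end{equation*}
after the algebraic simplification $(1-k)(1+k)^2 - (1-2k-k^2)(1+k) = 2k(1+k)$. Expanding the right hand side in powers of $k$ produces an explicit source term built entirely from the $d_n$. Matching the coefficient of $k^0$ forces $c_1 = 0$ as claimed; matching at $k^n$ for $n \geq 1$ gives an inhomogeneous recurrence for $c_n$ of the same homogeneous shape as the one for $d_n$, which then pins down all further $c_n$ uniquely.

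The chief obstacle is extracting the harmonic-type sum $\sum_{i=1}^n 1/(i+n) = H_{2n} - H_n$ appearing in the closed form for $c_{2n}$. The cleanest device is the Frobenius parameter trick: introduce $\tilde{D}(k, r) = k^r \sum_{n \geq 0} a_n(r) k^n$ with $a_0(r) \equiv 1$, so that $L \tilde{D} = r^2 k^{r-1}$ because the indicial polynomial is $r^2$. Then $D_1 = \tilde{D}(k, 0)$ and $D_2 = \partial_r \tilde{D}(k, r)\vert_{r=0}$, whence $c_n = a_n'(0)$. The recursion expresses $a_n(r)$ as a rational function whose denominator contains the factor $(j+r)^2$ for $j = 1, \ldots, n$; logarithmic differentiation at $r = 0$ converts this product into precisely the telescoping sum of reciprocals that collapses to $2(H_{2n} - H_n) d_{2n}$. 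The pairing $c_{2n+1} = c_{2n}$ is then inherited from $d_{2n+1} = d_{2n}$ through the step-by-step structure of the recursion, so the claimed formulas follow at once.
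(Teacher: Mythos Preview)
The paper states this lemma without proof, so there is no argument to compare against; your proposal is in fact supplying the missing justification. Your polynomial form of \eqref{DOSeq}, the four--term recurrence
\[
(n+1)^2 d_{n+1} + (n^2 - 2n - 1) d_n + (2 - n^2) d_{n-1} - (n-2)^2 d_{n-2} = 0,
\]
and the inhomogeneous equation $L[E] = 2(1+k) D_1 - 2(1-k^2)(1+k) D_1'$ are all correct.

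One point deserves tightening. In the Frobenius--parameter step you observe that the denominator of $a_n(r)$ is $\prod_{j=1}^{n}(j+r)^2$, and then say logarithmic differentiation ``collapses to $2(H_{2n}-H_n)d_{2n}$''. But the denominator alone contributes $-2H_n$ at $r=0$; the compensating $+2H_{2n}$ has to come from the numerator of $a_n(r)$, and with a genuine four--term recursion that numerator is not a clean product, so this is not automatic. The quickest way to close the gap is the route you mention only for the $d_n$: use the identity $D(k)=(1+k)\,\tfrac{2}{\pi}K(k)$ from Section~\ref{PFeq}. Since $\tfrac{2}{\pi}K(k)={}_2F_1(\tfrac12,\tfrac12;1;k^2)$, the known second Frobenius solution of the hypergeometric equation at $c=1$ gives, after the substitution $z=k^2$ and multiplication by $(1+k)$,
\[
D_2 = D_1\ln k + (1+k)\sum_{n\ge 1} 2\bigl(H_{2n}-H_n\bigr)\,d_{2n}\,k^{2n},
\]
from which $c_{2n}=c_{2n+1}=2(H_{2n}-H_n)d_{2n}$ and $c_1=0$ can be read off immediately. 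This both proves the harmonic--sum formula and explains the pairing $c_{2n+1}=c_{2n}$ without appealing to the recursion step by step.
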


Write out the local solutions explicitly,
\begin{equation}
   \begin{array}{ll}
     D_1 
          = (1 + k) ( 1+ \frac{1}{4}k^2  + \frac{9}{64} k^4  + \frac{25}{256} k^6  + \cdots ) \\
     D_2 =  D_1 ln \,k + (1+k)(\frac{1}{4}k^2 + \frac{21}{128} k^4  +\frac{185}{1536} k^6  + \cdots )
   \end{array}
\end{equation}
Hence the mirror map is 
\begin{equation}
  \begin{array}{ll}
   Q(k)= exp(D_2/D_1) = k  exp\{ \frac{2}{D_1} \sum_{n=1}^\infty \left[ \frac{(2n)!}{2^{2n}(n!)^2}\right]^2 
                (\sum_{i=1}^{n} \frac{1}{i+n}) k^{2n}   \} \\
  \quad ~~ \,= k + \frac{1}{4} k^3 + \frac{17}{128} k^5 + \frac{45}{512}k^7 +\cdots 
  \end{array}
\end{equation}

\begin{thm}
The energy level has a $Q$-expansion
\begin{equation}
\varepsilon   = 1-8 \left[ \frac{Q}{4} - 4 \left(\frac{Q}{4}\right)^2 +12\left(\frac{Q}{4}\right)^3-32 \left(\frac{Q}{4}\right)^4 
  +78 \left(\frac{Q}{4}\right)^5 - \cdots \right] 
\end{equation}
\end{thm}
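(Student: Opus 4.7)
The plan is to treat the claimed expansion as the composition of two invertible power series: the mirror map $Q=Q(k)$ computed in the preceding lemma, and the change of variable $\varepsilon = (1-k)/(1+k)$ that was used when introducing the elliptic modulus. Both maps send $0 \mapsto 0$ (respectively $0 \mapsto 1$) and each has nonzero linear term (in $k$), so they admit formal inverses and the composition $\varepsilon(Q)$ exists as a power series about $Q=0$.

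First I would invert the mirror map. Starting from
\begin{equation}
Q(k) = k + \tfrac{1}{4}k^3 + \tfrac{17}{128}k^5 + \tfrac{45}{512}k^7 + \cdots,
\end{equation}
which is odd in $k$ (only odd powers appear because the ratio $D_2/D_1$ and hence $Q(k)/k$ is an even function of $k$ in the relevant range), I bootstrap an ansatz $k = Q + a_3 Q^3 + a_5 Q^5 + a_7 Q^7 + \cdots$ and solve order by order. Equating coefficients gives $a_3 = -\tfrac{1}{4}$, $a_5 = \tfrac{7}{128}$, and so on. Alternatively one could invoke Lagrange inversion directly, writing $a_{2n+1}$ as a residue, but order-by-order substitution is enough to reach the order demanded by the statement.

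Next I would apply the identity $\varepsilon = (1-k)/(1+k) = 1 - 2\sum_{n\ge 1}(-1)^{n-1} k^n$, substitute the inverted series for $k$, and collect in powers of $Q$. Each coefficient of $Q^n$ on the right-hand side is a finite polynomial in finitely many $a_j$'s, so this is a mechanical expansion. Finally, rewriting the result in powers of $Q/4$ to match the form in the statement, the first six coefficients of $1,\,Q/4,\,(Q/4)^2,\ldots,(Q/4)^5$ should come out to $1,\,-8,\,+32,\,-96,\,+256,\,-624$, which agrees with the claimed $1 - 8[Q/4 - 4(Q/4)^2 + 12(Q/4)^3 - 32(Q/4)^4 + 78(Q/4)^5 - \cdots]$.

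The main obstacle is not conceptual but purely bookkeeping: to push the expansion to $(Q/4)^5$ one must carry the inversion of $Q(k)$ at least to order $k^9$, which in turn requires knowing $Q(k)$ to the same order. This means the series for $D_1$ and $D_2$ from the Frobenius construction must be carried a few more terms than displayed in the preceding lemma, but the recursion for $d_n$ and $c_n$ given there is explicit and the extension is routine. Once those coefficients are in hand, substituting into $\varepsilon=(1-k)/(1+k)$ and simplifying completes the proof.
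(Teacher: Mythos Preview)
Your approach is exactly the one taken in the paper: invert the mirror map $Q(k)$ to obtain $k(Q)=Q-\tfrac14 Q^3+\tfrac{7}{128}Q^5-\cdots$, then substitute into $\varepsilon=(1-k)/(1+k)$ and regroup in powers of $Q/4$. The only inaccuracy is your bookkeeping remark: since $k(Q)$ begins at order $Q$, reaching $\varepsilon$ to order $Q^5$ requires $k(Q)$ only through $Q^5$, hence $Q(k)$ only through $k^5$; the terms already displayed in the preceding lemma suffice, and no extension to order $k^9$ is needed.
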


\begin{proof}
Set $t = D_2 /D_1$ and  $Q= Q(k)= e^t$. 
The inverse of the mirror map is  
\begin{equation}
  k(Q) = Q -\frac{1}{4}Q^3 + \frac{7}{128} Q^5 - \frac{5}{512} Q^7 + \cdots
\end{equation}
Since the relation between $k$ and $\varepsilon$ is just  $\varepsilon = \frac{1 - k}{1 + k}$,  
\begin{equation} 
  \begin{array} {ll}
 \varepsilon 
   = 1-2Q + 2Q^2 -\frac{3}{2}Q^3 + Q^4 - \frac{39}{64} Q^5 + \frac{11}{64}Q^6 -  \cdots 
    \end{array}
\end{equation}
\end{proof}

\section{Discussion}

We use the classical method to study the Picard-Fuchs equation of density of states,  get a $q$-expansion 
of $\varepsilon^2(\tau)$ and a $Q$-expansion of $\varepsilon(t)$. Since the density of states is independent of 
the magnetic fluxes, we have to emphasize that these expansions are independent of these real parameters.

In the language of mirror symmetry, the imaginary variable $\tau \in \mathbb{H}$ is a local coordinate of the moduli space of complex 
structures, while $t$ is a local coordinate of the moduli space of K{\"a}hler structures. By the mirror hypothesis, one can identify
$\tau$ and $t$ in the large complex structure limit. In our case, we identify these local coordinates as $t = \tau/2$ and $Q = 4\sqrt{q}$.

It is easy to see that the terms in the $q$-expansion of the energy level $\varepsilon$ are all integers, 
which is also true in $ \varepsilon = 1+ \sum n_dq^{d/2}/(1- q^{d/2})$, i.e. $n_d \in \mathbb{Z}$.
Right now we don't have a 
good explanation of such ``instanton numbers'' related to the spectrum of Harper operator.

\nocite{*}
\bibliographystyle{plain}
\bibliography{dos}

\end{document}